\newtheorem{definition}{Definition}
\newtheorem{theorem}{Theorem}
\newtheorem{example}{Example}
\newtheorem{proposition}{Proposition}
\newcommand{\ms}{\mathrm{s}} % upright s 
\newcommand{\id}{\mathrm{id}} % identity 
\newcommand{\Tr}{\mathrm{Tr}} % trace
\begin{document}

\title{The entangling power of non-entangling channels}

\author{Julien Pinske}
     \affiliation{Niels Bohr Institute, University of Copenhagen, Jagtvej 155 A, DK-2200 Copenhagen, Denmark}
     \email{julien.pinske@nbi.ku.dk}

\author{Jan Sperling}
    \affiliation{Institute for Photonic Quantum Systems (PhoQS), Paderborn University, Warburger Stra\ss{}e 100, 33098 Paderborn, Germany}

\author{Klaus M\o lmer}
    \affiliation{Niels Bohr Institute, University of Copenhagen, Jagtvej 155 A, DK-2200 Copenhagen, Denmark}

\date{\today}

\begin{abstract}
    There are processes that cannot generate entanglement but may, nevertheless, amplify entanglement already present in a system.
    Here, we show that a non-entangling operation can increase the Schmidt number of a quantum state only if it can generate entanglement with some non-zero probability.
    This is in stark contrast to the case where the parties of a quantum network
    are only able to control their joint state by local operations and classical communication (LOCC).
    There, being able to apply operations probabilistically (stochastic LOCC) does not increase the Schmidt number.
    Our findings show that certain non-entangling operations become entangling when selecting on specific measurement outcomes.
    This naturally leads us to the class of stochastically non-entangling maps, being those that cannot generate entanglement even probabilistically.
    Intrigued by this finding, we devise a Schmidt number for quantum channels that quantifies whether a channel can generate entanglement probabilistically.
    Moreover, we show that a channel is non-entangling if and only if its dual map is \textit{witness-preserving}---it takes entanglement witnesses to witnesses.
    Based on this finding, we derive inequalities whose violation signals that a process generates entanglement.
    \end{abstract}
    
    \maketitle

    \section{Introduction}
    \label{sec:intro}

    Quantum entanglement describes correlations between two or more subsystems that cannot be attributed to any classical joint description \cite{W89}.
    Research in the past decades has spurred a strong interest in entanglement because of its foundational aspects \cite{S05,HB15} and its central role in quantum information science \cite{E91,R17}.
    However, many processes do not generate entanglement, and it is tempting to think of these as marking the limit of classical physics.
    The most general process that preserves the separability of a composite system is described by a non-entangling map \cite{HN03}.
    Such a process may correspond to a sequence of local quantum operations or describes the accumulation of joint classical noise in a network.

    For special quantum tasks, it is often useful to consider specific subclasses of non-entangling maps, such as local operations and classical communication (LOCC) \cite{PW91}.
    In the LOCC paradigm, spatially separated parties transmit classical information, but only local quantum operations are applied.
    The mathematical structure of such protocols has been the subject of extensive studies, and there are several variants of LOCCs \cite{C04,HHH09,CL14}.
    While relatively easy to implement, LOCC protocols are quite limited in what they can achieve \cite{SWG18}.
    In particular, any measure of entanglement does not increase under LOCC \cite{BDS96}. 
    The situation changes when the parties in a quantum network (post-)select a specific outcome of their local operations that occurs with some non-zero probability.
    This approach to manipulating entanglement probabilistically is known as stochastic LOCC (SLOCC) \cite{DVC0}.
    SLOCC is more powerful than LOCC \cite{I04,LMD08} and, in particular, it can increase certain entanglement measures \cite{ODT07,BCD13}.
    On the other hand, measures unaffected by this type of postselection have been devised as well, such as the Schmidt number \cite{TH00,SV11}.

    Moving beyond LOCC, subclasses of non-entangling operations have been specified with reference to their Choi matrix  \cite{C75}.
    For example, the class of separable maps \cite{BD99,R99,GG07} contains operations whose Choi matrix is separable \cite{CD01}.
    The so-called positive partial-transpose (PPT) operations \cite{R99(2),MW08}, relevant for bound entanglement \cite{BDM99} and entanglement distillation \cite{XRL10}, have a Choi matrix which is PPT.
    Non-entangling maps also include operations that remove entanglement from a given state, such as
    the heavily studied class of entanglement-breaking channels \cite{HS03,MZ10} as well as entanglement-destroying maps \cite{LH17}.
    These are of interest for studying the cost of erasing quantum correlations \cite{BB18} as well as imposing censorship on a quantum network \cite{PS24,PM24,PM25}.
    
    Notably, there are non-entangling maps that are more powerful than any of the above classes \cite{VH05,BP10,PAH24,APH24}.
    These include operations that are non-entangling but not completely non-entangling \cite{CG19}; i.e., they can distribute entanglement to previously unentangled parties when acting on parts of a larger, composite system.
    A prominent example is the swap gate which enables a protocol for entanglement swapping between nodes in a quantum network \cite{PBW98}.
    Some of these operations have been shown to be able to increase the Schmidt number, and, unlike LOCC, some non-entangling maps may be able to distill entanglement even from bound entangled states \cite{CV20}.

    Despite these advances, a coherent picture of how non-entangling dynamics acquire operational entangling power is still lacking.
    In particular, the role of probabilistic postselection—ubiquitous in experimental realizations of entanglement generation and manipulation—has not been clarified beyond the SLOCC framework.
    Closely related, a systematic, convex classification of quantum channels according to their ability to create or amplify entanglement remains missing.
    Understanding these issues is essential for identifying the boundaries of non-entangling dynamics and for explaining how certain operations, which hide latent entangling capabilities, can nevertheless be harnessed to distribute or distill entanglement.
    
    In this paper, we show that the (entangling) power of non-entangling maps originates from them being not stochastically non-entangling.
    Stochastically non-entangling maps are those operations which do not generate entanglement even probabilistically.
    This class contains within it the set of (stochastic) LOCC and separable operations, but is strictly smaller than the set of non-entangling maps; see Fig~\ref{fig:sets}. 
    That a process does not produce quantum correlations, even when allowing for postselection, turns out to be a very strict condition, as has been identified within the framework of general quantum resource theories \cite{CG19}.
    For instance, in the case of quantum coherence this marks the difference between incoherent and maximally incoherent operations \cite{BC14,CG17} while the analogous case for entanglement is a much less explored subject.
    Here, we show that stochastically non-entangling maps do not increase the Schmidt number of a state.
    This implies that any non-entangling map that can increase the Schmidt number can generate entanglement when supplied by postselection.
    This is a remarkable result, as it implies that one can make still use of the entangling power generated by a selective process even when one has only access to the non-selective operation.
    Our analysis suggests that the capability of some non-entangling maps to amplify and distill entanglement is due to one of their Kraus operators being actually entangling.
    Intrigued by this finding, we carry out a convex roof construction \cite{U98} over the set of quantum channels, which leads to what we dub the \textit{channel Schmidt number}.
    Afterwards, we identify the entire class of non-entangling processes as those whose dual map is witness-preserving; i.e., it takes entanglement witnesses \cite{B05,GT09,SV13} to witnesses.
    We obtain several useful properties of these maps and derive Bell-like inequalities \cite{T00} for the detection of entangling maps.
    Surprisingly, despite non-entangling maps being able to amplify entanglement already present in a quantum state, their dual maps do not refine a witness.

    The paper is structured as follows. 
    In Sec.~\ref{sec:SLOCC}, we review the (S)LOCC paradigm and its relation to the Schmidt number.
    Section~\ref{sec:SNE} introduces stochastically non-entangling maps, and we show that these do not increase the Schmidt-number.
    Based on their convex structure, we devise a Schmidt number for quantum channels and derive its most important properties in Sec.~\ref{sec:Schmidt-number}.
    In Sec. \ref{sec:witness}, we consider the entire class of non-entangling maps and show that these operations are characterized by their dual map preserving entanglement witnesses.
    Based on this finding, we derive inequalities whose violation signals that a process generates entanglement in Sec.~\ref{sec:Bell-like}.
    Finally, Sec.~\ref{sec:Fin} is reserved for a summary and concluding remarks.

    \begin{figure}
        \centering
        \begin{tikzpicture}
        \node at (0,0) {\includegraphics[width=0.48\textwidth]{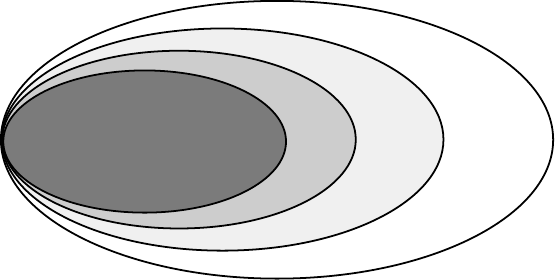}};
        \node at (-2,0.2) {LOCC/SLOCC};
        \node at (-2,-0.2) {separable maps};
        \node at (0.6,0) {$\mathcal{N}_{\ms}$};
        \node at (1.9,0.13) {non-ent.};
        \node at (1.9,-0.13) {maps};
        \node at (3.4,0.13) {quantum};
        \node at (3.4,-0.13) {channels};
        \end{tikzpicture}
        \caption{%
            Hierarchy of quantum channels.
            The set of channels (completely positive maps) comprises non-entangling maps, stochastically non-entangling maps $\mathcal{N}_\ms$, as well as (stochastic) LOCC and separable maps.
            All subsets are proper subsets.
        }\label{fig:sets}
    \end{figure}

    \section{Entanglement measures and separable maps}
    \label{sec:SLOCC}

    Consider $n$ parties that share an entangled state $\rho$, i.e.,
    \begin{equation}
        \label{eq:Sep-State}
        \rho\neq \sum_a p_a \rho^a_1\otimes\dots\otimes \rho^a_n,
    \end{equation}
    for $p_a\geq 0$.
    Due to technological limitations or the parties being far-separated from each other, they cannot manipulate the state $\rho$ arbitrarily.
    Instead, the parties apply local quantum operations to their individual systems while coordinating their action using classical communication (LOCC).
    While the precise mathematical structure of such a protocol may vary \cite{C04,HHH09,CL14}, any LOCC operation corresponds to a separable map \cite{BD99,R99,GG07},
    \begin{equation}
        \label{eq:sep-map}
        \Lambda(\rho)=\sum_{j} \big(M_1^j\otimes\dots \otimes M_n^j\big) \rho \big(M_1^j\otimes\dots \otimes M_n^j\big)^\dag.
    \end{equation}
    
    The quantum channel $\Lambda$ is deterministic in the sense that it is indifferent to the specific outcomes $j$ in Eq.~\eqref{eq:sep-map}. 
    Mathematically, this corresponds to $\Lambda$ being trace-preserving, $\Tr(\Lambda(\rho))=\Tr(\rho)$.
    In contrast, if the parties of a network are able to select a certain $j$, then they implement
    \begin{equation}
        \label{eq:stoch-sep-map}
        \Lambda_j(\rho)=\big(M_1^j\otimes\dots \otimes M_n^j\big) \rho \big(M_1^j\otimes\dots \otimes M_n^j\big)^\dag,
    \end{equation}
    with some non-zero probability $\Tr(\Lambda_j(\rho))\leq 1$.
    This approach to manipulating entanglement is known as stochastic LOCC (SLOCC), and it is more powerful than LOCC \cite{DVC0,I04,LMD08}.
    For example, the two-mode state $\ket{\psi}=(\sqrt{2}\ket{01} + \ket{12})/\sqrt{3}$ of an electromagnetic field can be transformed into the so-called maximally entangled state $\ket{\phi^+}\propto \sqrt{2/3}(\ket{00} + \ket{11})$ by applying the annihilation operator $\hat{a}$ to the second mode \cite{KPL14}. 
    Despite that, there exist entanglement measures that are non-increasing under SLOCC, such as the Schmidt number \cite{TH00,SV11}
    \begin{equation}
        \label{eq:Schmidt-for-state}
    	r(\rho)= \min_{(p_a,\psi_a)}\Big\{\max_a r(\psi_a):~\rho=\sum_a p_a\ket{\psi_a}\bra{\psi_a}\Big\},
    \end{equation}
    where one minimizes over possible mixtures.
    That is, $r(\Lambda(\rho))\leq r(\rho)$, for any $\Lambda$ of the form in Eq.~\eqref{eq:stoch-sep-map}.
    The definition in Eq.~\eqref{eq:Schmidt-for-state} includes the multi-partite Schmidt rank $r(\psi)$ of a pure state $\ket{\psi}$, being the minimal number of linearly independent terms so that \cite{EB01}
    \begin{equation}
        \ket{\psi}=\sum_{b=1}^{r(\psi)}\ket{\psi_{1}^b}\otimes \dots \otimes \ket{\psi_{n}^b}.
    \end{equation}
    Informally speaking, probabilistic operations are able to rescale the amplitudes of a quantum state due to the output state in Eq.~\eqref{eq:stoch-sep-map} not being normalized.
    While this can alter the Schmidt coefficients of a pure state, its Schmidt rank remains unaffected.
    In conclusion, within the LOCC paradigm the possibility of postselection does not enable the parties in a system to increase the Schmidt number.
    It is a primary concern of this paper to show that the situation changes when we move beyond LOCC and consider the most general class of non-entangling maps. 

    \section{Stochastically non-entangling maps}
    \label{sec:SNE}

    %Throughout, the set of separable states is denoted by $\mathcal{S}$.
    The most general change a system can undergo without creating entanglement is given by a non-entangling map.
    A channel $\Lambda$ is said to be non-entangling if
        \begin{equation}
            \forall \sigma\in\mathcal{S}:\quad \Lambda(\sigma)\in\mathcal{S},
        \end{equation}
    where $\mathcal{S}$ denotes the set of separable states.
    Thus, $\Lambda$ is entangling if there exists at least one separable state $\sigma\in\mathcal{S}$ such that $\Lambda(\sigma)\notin \mathcal{S}$ is entangled. 
    Non-entangling maps have the potential to increase the Schmidt number \cite{CV20}, as the following example illustrates.

    \begin{example}
        \label{ex:non-mono}
        \normalfont
        Consider the two-qudit channel
        \begin{equation}
            \label{eq:ex-non-ent}
            \Lambda(\rho)=\Tr(\phi^+_k\rho)\ket{\psi}\bra{\psi} + \Tr((\mathbb{1}^{\otimes 2}-\phi^+_k)\rho)\frac{\mathbb{1}^{\otimes 2}}{d^2},
        \end{equation}
        where $\phi^+_k=\ket{\phi^+_k}\bra{\phi^+_k}$ and $\ket{\phi^+_k}=\tfrac{1}{\sqrt{k}}\sum_{a=0}^{k-1}\ket{aa}$ has Schmidt rank $k$ and 
        \begin{equation}
            \label{eq:large-Schmidt-state}
            \ket{\psi}=\sum_{b=0}^{d-1}\lambda_b \ket{bb},
        \end{equation}
        has Schmidt coefficients $\lambda_{0}\geq\dots \geq \lambda_{d-1}>0$.
        $\Lambda$ is non-entangling for $\lambda_0\lambda_1\leq (k-1)/d^2$ (see Appendix~\ref{app:robust}).
        Note that, $\Lambda$ increases the Schmidt rank, $r(\Lambda(\phi^+_k))=d$. 
    \end{example}

    \subsection{Stochastically non-entangling maps}
    
    Despite a channel $\Lambda$ being non-entangling, any of its Kraus representations, $\Lambda(\rho)=\sum_j M_j \rho M_j^\dag$, may still have a Kraus operator $M_j$ that creates entanglement.
    That is, there exists a separable state $\sigma\in\mathcal{S}$, such that $M_j\sigma M_j^\dag\notin\mathcal{S}$ is entangled, for some $j$.
    We refer to such a channel as being non-entangling but not stochastically non-entangling. 
    Motivated by this observation, we introduce the class of stochastically non-entangling maps, being those channels that cannot create entanglement even when supplied by postselection.
    \begin{definition}
        \label{def:ne-maps}
        \normalfont 
        $\Lambda$ is stochastically non-entangling if 
        \begin{equation*}
            \Lambda = \sum_j \Lambda_j,
        \end{equation*}
        with $\Lambda_j(\rho)=M_j\rho M_j^\dag$ being non-entangling for all $j$.
    \end{definition}
    The physical significance of stochastically non-entangling channels is that they remain non-entangling even when one is able to select a specific outcome $j$.
    Clearly, any stochastically non-entangling map is non-entangling.
    Given some Kraus representation, $\Lambda(\rho)=\sum_{j}M_j \rho M_j^\dag$, it is, generally, nontrivial to decide whether $\Lambda$ is stochastically non-entangling, due to the non-uniqueness of the operators $M_j$.
    
    We denote the set of stochastically non-entangling maps by $\mathcal{N}_{\ms}$. 
    Mathematically, $\mathcal{N}_{\ms}$ is obtained from the convex closure of maps $\Lambda(\rho)=M\rho M^\dag$, whose single Kraus operator $M$ preserves any product state, i.e.,
    \begin{equation}
        \label{eq:prod-to-prod}
        M\ket{\psi_1}\otimes \dots \otimes \ket{\psi_n}=\ket{\phi_1}\otimes \dots \otimes \ket{\phi_n}.
    \end{equation}
    This includes Kraus operators of the following forms:
    \begin{equation}
    \label{eq:Kraus-ne}
    \begin{aligned}
        M_1\otimes \dots \otimes M_n,
        \quad
        \big( M_1\otimes \dots \otimes M_n\big)V_{\pi},
        \\
        \text{and}\quad
        \ket{\psi_1,\dots,\psi_n}\bra{\Psi},
    \end{aligned}
    \end{equation}
    where $V_{\pi}$ exchanges tensor factors according to a permutation $\pi$ (e.g., the swap operator), and $\ket{\Psi}$ is an arbitrary $n$-partite state.
    For $n=2$, the examples in Eq.~\eqref{eq:Kraus-ne} cover all possible operators satisfying Eq.~\eqref{eq:prod-to-prod} \cite{G14}. For $n\geq 3$, also combinations of those can be applied to different partitions of the $n$ subsystems; e.g., $M_1\otimes\dots \otimes M_{k}\otimes \ket{\psi_{k+1},...,\psi_n}\bra{\Psi}$, with $\ket{\Psi}\in\mathcal{H}^{\otimes (n-k)}$, obeys Eq.~\eqref{eq:prod-to-prod} as well.
    Moreover, every invertible operator obeying Eq.~\eqref{eq:prod-to-prod} can be written as \cite{MM59,W67}
    \begin{equation}
        \big(U_1\otimes \dots \otimes U_n \big) V_{\pi},
    \end{equation}
    with generally invertible (not just unitary) operators $U_i$ acting on the $i$th subsystem.
    The simple characterization of these channel's Kraus operators in Eq.~\eqref{eq:Kraus-ne} makes the class $\mathcal{N}_\ms$ particularly amenable to work with.
    For example, one readily arrives at the following proposition.
    \begin{proposition}
        \label{prop:rank-mono}
        \normalfont
        Let $M$ map product states onto product states.
        Then $r(M\ket{\psi})\leq r(\psi)$ for any pure state $\ket{\psi}$.
    \end{proposition}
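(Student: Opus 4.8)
The plan is to reduce everything to the linearity of $M$ together with the minimal product decomposition guaranteed by the Schmidt rank. First I would set $r=r(\psi)$ and invoke the definition of the pure-state Schmidt rank to write
\begin{equation*}
    \ket{\psi}=\sum_{b=1}^{r}\ket{\psi_1^b}\otimes\dots\otimes\ket{\psi_n^b},
\end{equation*}
a decomposition of $\ket{\psi}$ into $r$ product terms. Since $M$ is a linear operator, applying it termwise gives
\begin{equation*}
    M\ket{\psi}=\sum_{b=1}^{r}M\big(\ket{\psi_1^b}\otimes\dots\otimes\ket{\psi_n^b}\big).
\end{equation*}

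The crucial step is then to feed in the hypothesis via Eq.~\eqref{eq:prod-to-prod}: because $M$ sends every product vector to a product vector, each summand $M(\ket{\psi_1^b}\otimes\dots\otimes\ket{\psi_n^b})$ is itself of product form $\ket{\phi_1^b}\otimes\dots\otimes\ket{\phi_n^b}$ (with the zero vector permitted, which can only lower the count). Hence $M\ket{\psi}$ is exhibited as a sum of at most $r$ product vectors, and I would conclude $r(M\ket{\psi})\leq r=r(\psi)$.

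The one point that needs care---and the only place I anticipate any real work---is that the Schmidt rank is defined as the minimal number of \emph{linearly independent} product terms, whereas the construction above only supplies $r$ possibly dependent ones. I would close this gap with the standard reduction argument: if the product summands satisfy a nontrivial linear relation, solve it for one summand and substitute back, which rewrites the vector as a sum of scalar multiples of the remaining product vectors, i.e.\ as a sum of $r-1$ product terms; iterating terminates at a linearly independent product decomposition of size $\leq r$. A minor subsidiary remark is that the factors $\ket{\psi_i^b}$ need not be normalized, but this is harmless since any scalar multiple of a product vector is again a product vector, so $M$ preserves the product structure on these unnormalized inputs as well. With these observations the statement follows directly, the content being essentially the interaction of linearity with the product-preserving property of $M$.
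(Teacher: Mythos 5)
Your proposal is correct and follows essentially the same route as the paper: take the minimal product decomposition of $\ket{\psi}$, apply $M$ termwise by linearity, and use the product-preserving hypothesis to exhibit $M\ket{\psi}$ as a sum of at most $r(\psi)$ product vectors. The only difference is cosmetic---you spell out the standard reduction from a linearly dependent set of product summands to an independent one, whereas the paper disposes of this with a parenthetical remark that non-invertibility of $M$ can only make images linearly dependent and hence only lower the count.
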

    \begin{proof}
        Every pure state with an $n$-partite Schmidt rank $r(\psi)$, by definition, can be expanded as
        \begin{equation}
            |\psi\rangle=\sum_{a=1}^{r(\psi)} |\psi_1^a\rangle\otimes\cdots\otimes|\psi_n^a\rangle,
        \end{equation}
        where $\{|\psi_1^a\rangle\otimes\cdots\otimes|\psi_n^a\rangle\}_a$ are linearly independent.
        Since $M$ preserves product states,
        \begin{equation}
            M|\psi\rangle=\sum_{a=1}^{r(\psi)} |\phi_1^a\rangle\otimes\cdots\otimes|\phi_n^a\rangle
        \end{equation}
        has at most $r(\psi)$ linearly independent, product elements $|\phi_1^a\rangle\otimes\cdots\otimes|\phi_n^a\rangle = M(|\psi_1^a\rangle\otimes\cdots\otimes|\psi_n^a\rangle)$,
        where some superscripts $a$ and $a'\neq a$ could result in linearly dependent elements if $M$ is not invertible;
        thus, $r(M\ket{\psi})\leq r(\psi)$.
    \end{proof}
    The set $\mathcal{N}_\ms$ is strictly larger than the set of (stochastic) separable maps~\eqref{eq:stoch-sep-map}.
    For example, the channel $\Lambda(\rho)=\Tr(\phi^+_k\rho)\ket{00}\bra{00}$ is stochastically non-entangling because of its single Kraus operator $M=\ket{00}\bra{\phi^+_k}$ preserving product states; see Eq.~\eqref{eq:prod-to-prod}.
    Simultaneously, it is not separable, because $M\neq M_1\otimes M_2$ is not a simple tensor product.

    \subsection{The Schmidt-number under stochastically non-entangling maps}

    Similar to separable maps, stochastically non-entangling maps do not increase the Schmidt number.
    \begin{theorem}
        \label{th:monotone}
        Let $\Lambda$ be stochastically non-entangling.
        Then,
        \begin{equation*}
            \forall \rho: \quad r(\Lambda(\rho))\leq r(\rho).
        \end{equation*}
    \end{theorem}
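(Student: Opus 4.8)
The plan is to combine the convex-roof definition of the Schmidt number in Eq.~\eqref{eq:Schmidt-for-state} with the Kraus decomposition of a stochastically non-entangling $\Lambda$ and the per-operator bound supplied by Proposition~\ref{prop:rank-mono}. The guiding idea is that $r(\Lambda(\rho))$ is defined as a \emph{minimum} over pure-state ensembles of $\Lambda(\rho)$, so to upper-bound it I only need to exhibit a single admissible ensemble all of whose members have Schmidt rank at most $r(\rho)$.

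First I would fix an optimal decomposition of the input, $\rho=\sum_a p_a\ket{\psi_a}\bra{\psi_a}$, attaining the minimum in Eq.~\eqref{eq:Schmidt-for-state}, so that $\max_a r(\psi_a)=r(\rho)$. By Definition~\ref{def:ne-maps} I may write $\Lambda(\rho)=\sum_j M_j\rho M_j^\dag$ with each single-Kraus map $\Lambda_j(\rho)=M_j\rho M_j^\dag$ non-entangling. The bridging observation I would record is that a single-Kraus map is non-entangling precisely when its Kraus operator preserves product states, as in Eq.~\eqref{eq:prod-to-prod}: acting on a pure product state it returns a pure state, which is separable if and only if it is again a product state, and linearity then extends this from pure products to all of $\mathcal{S}$. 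Hence every $M_j$ satisfies Eq.~\eqref{eq:prod-to-prod} and falls within the scope of Proposition~\ref{prop:rank-mono}.

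Next I would substitute the decomposition of $\rho$ into the Kraus form to obtain
\begin{equation*}
    \Lambda(\rho)=\sum_{j,a} p_a\,\big(M_j\ket{\psi_a}\big)\big(M_j\ket{\psi_a}\big)^\dag,
\end{equation*}
a sum of (unnormalized, possibly vanishing) rank-one positive operators. Discarding the terms with $M_j\ket{\psi_a}=0$ and normalizing the survivors yields a genuine pure-state ensemble $\Lambda(\rho)=\sum q_{ja}\ket{\chi_{ja}}\bra{\chi_{ja}}$ with $\ket{\chi_{ja}}\propto M_j\ket{\psi_a}$ and nonnegative weights $q_{ja}=p_a\|M_j\ket{\psi_a}\|^2$ summing to $\Tr(\Lambda(\rho))$ by trace preservation. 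Since normalization does not change the Schmidt rank, Proposition~\ref{prop:rank-mono} gives $r(\chi_{ja})=r(M_j\ket{\psi_a})\leq r(\psi_a)\leq r(\rho)$ for every surviving term.

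Finally, because this exhibited ensemble is one admissible decomposition of $\Lambda(\rho)$, the minimization in Eq.~\eqref{eq:Schmidt-for-state} yields $r(\Lambda(\rho))\leq \max_{j,a} r(\chi_{ja})\leq r(\rho)$, which is the claim. The only delicate points are bookkeeping ones—dropping null vectors and verifying that the $q_{ja}$ form a valid trace-normalized ensemble—together with the bridging observation that Definition~\ref{def:ne-maps} truly forces each $M_j$ to preserve products. The substantive content is already isolated in Proposition~\ref{prop:rank-mono}, so I expect no genuine obstacle beyond correctly invoking the min-over-decompositions structure to convert a single favorable ensemble into the desired upper bound.
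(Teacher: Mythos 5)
Your proof is correct and follows essentially the same route as the paper's: decompose $\Lambda$ into non-entangling rank-one maps, apply Proposition~\ref{prop:rank-mono} to each Kraus operator, and use the min-over-decompositions (convexity) structure of the Schmidt number to reassemble the bound. You in fact spell out more carefully than the paper the step that lifts the pure-state bound of Proposition~\ref{prop:rank-mono} to the mixed state $\Lambda(\rho)$ via an explicit pure-state ensemble, a step the paper compresses into the phrase ``since $r$ is convex.''
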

    \begin{proof}
    Let $\Lambda$ be stochastically non-entangling.
    Then it has a Kraus representation, $\Lambda(\rho)=\sum_j M_j \rho M_j^\dag$, with $\Lambda_j(\rho) = M_j \rho M_j^\dag$ being non-entangling for all $j$.
    Since $r$ is convex, $r(\Lambda(\rho))\leq \sum_{j} r(\Lambda_j(\rho))$, and Proposition \ref{prop:rank-mono} yields $r(\Lambda_j(\rho))\leq r(\rho)$, we have that $r(\Lambda(\rho))\leq r(\rho)$.
    \end{proof}
    Recall that the channel $\Lambda$ in Example~\ref{ex:non-mono} can increase the Schmidt number.
    Now, Theorem~\ref{th:monotone} implies that $\Lambda$ cannot be stochastically non-entangling.
    If an experimenter is able to postselect on the outcomes $j$ of a Kraus representation, $\Lambda(\rho)=\sum_j M_j \rho M_j^\dag$, then they can generate entanglement using one of the $M_j$ which is entangling. 
    For concreteness, consider the experimenter is able to outcome-resolve the channel $\Lambda$ into
    \begin{equation}
        \label{eq:Kraus-ex-non-mono}
        \begin{split}
            M_0 & = \ket{\psi}\bra{\phi_k^+},\quad M_{jk} = \frac{1}{d}\ket{\xi_k}\bra{\zeta_j},\quad j\neq 0,\\
        \end{split}
    \end{equation}
    where $\ket{\zeta_j}$ and $\ket{\xi_k}$ form orthonormal bases, and $\ket{\zeta_0}=\ket{\phi_k^+}$.
    One readily verifies that 
    \begin{equation}
        \Lambda(\rho) = M_0 \rho M_0^\dag + \sum_{j=1,k=0}^{d^2-1}M_{jk}\rho M_{jk}^\dag
    \end{equation}
    recovers the channel in Example~\ref{ex:non-mono}.
    If an experimenter postselects on the outcome $0$, then the entangled state $M_0\sigma M_0^\dag\propto\ket{\psi}\bra{\psi}$ from Eq.~\eqref{eq:large-Schmidt-state} is generated with probability $\Tr(M_0^\dag M_0 \sigma)=\braket{\phi_k^+|\sigma|\phi_k^+}\leq 1/k$ for a separable input state $\sigma\in\mathcal{S}$.
    The specific form of these states $\ket{\zeta_j}$ and $\ket{\xi_k}$ depends on the physical origin of the white-noise contribution, $\mathbb{1}^{\otimes 2}/d^2$, in Eq.~\eqref{eq:ex-non-ent}.
    
    To give another operational viewpoint, suppose the parties of a quantum network are initially able to generate entanglement (probabilistically) using the map $\Lambda_j(\rho)=M_j\rho M_j^\dag$.
    However, due to a loss of knowledge about the measurement outcome $j$, they are confronted with having implemented the non-selective process $\Lambda(\rho)=\sum_j M_j\rho M_j^\dag$, which may well be non-entangling, caused by the statistical average.
    Despite that, the parties may still make use of the \textit{lost} (unknown) entanglement to increase the Schmidt number of their shared state.

    To gain more insight into the properties of stochastically non-entangling maps, we consider their action on states that already possess some input entanglement.
    Let $\mathcal{S}^k$ denote the set of states with Schmidt number smaller or equal to $k$.
    For example, the set $\mathcal{S}^1=\mathcal{S}$ contains separable states.
    The sequence of inclusions $\mathcal{S}^1\subseteq \dots \subseteq \mathcal{S}^k$,
    for any $k\geq 2$, imposes an ordering on the set of quantum states \cite{SV15}. 
    Given any stochastically non-entangling map $\Lambda\in\mathcal{N}_\ms$, Theorem~\ref{th:monotone} implies that the channel preserves the ordering, i.e.,
    \begin{equation}
        \label{eq:inclusion}
        \Lambda(\mathcal{S}^1)\subseteq \dots \subseteq \Lambda(\mathcal{S}^k).
    \end{equation}
    From this formal viewpoint, stochastically non-entangling maps are the structure-preserving maps for the theory of entanglement.

    \section{A Schmidt number for quantum channels}
    \label{sec:Schmidt-number}

    The set of stochastically non-entangling maps $\mathcal{N}_{\ms}$ is convex, and any convex set can be characterized through its extreme points.
    The extreme points of a convex set are those elements of the set that cannot be obtained from statistical mixture of other points in the set.
    For example, the set of extreme points $\mathcal{S}_0$ of the set of separable states $\mathcal{S}$ consists of pure product states.
    For the set $\mathcal{N}_{\ms}$, the extreme points are non-entangling rank-1 channels, i.e., $\Lambda(\rho)=M\rho M^\dag$, with $M$ being of the form~\eqref{eq:prod-to-prod}.
    Knowing the explicit form of the extremal points, cf. Eq.~\eqref{eq:Kraus-ne}, makes the set $\mathcal{N}_{\ms}$ paramount for a convex-roof construction \cite{U98,SV15} in which we firstly define a measure on rank-1 channels, and then extend the notion to arbitrary channels.
    The channel Schmidt rank of $\Lambda(\rho)=M\rho M^\dag$ is defined as
    \begin{equation}
        \label{eq:lhs}
        r(\Lambda)=\max_{\psi\in\mathcal{S}_0}r(M\ket{\psi}).
    \end{equation}
    
    For example, a rank-1 non-entangling map $\Lambda$ has only a single Kraus operator $M$, which preserves product states; see Eq.~\eqref{eq:prod-to-prod}.
    Thus, $r(\Lambda)=1$.
    As a second example, consider $\Lambda(\rho)=\Tr(\phi_k^+\rho)\phi_{k^\prime}^+$, with $\phi_{k}^+$ and $\phi_{k^\prime}^+$ having Schmidt rank $k$ and $k^\prime$, respectively. 
    From the Kraus operator $M=\ket{\phi_{k^\prime}^+}\bra{\phi_{k}^+}$, one finds $r(\Lambda)=k^\prime$.

    \subsection{Convex-roof construction for quantum channels}
	\label{ssec:convex-roof}
	
    Next, the analysis is extended to general channels.
    Since the Kraus representation of a channel is not unique, the minimum is taken over all possible decompositions, leading to a generalization of $r$ from Eq. \eqref{eq:lhs}.
    \begin{definition}
        \label{def:1s-Schmidt}
        \normalfont
        The channel Schmidt number of $\Lambda$ is
        \begin{equation}
            \label{eq:1s-Schmidt}
            r(\Lambda)=\min_{\{\Lambda_j\}_j}\Big\{\max_j r(\Lambda_j):~\Lambda=\sum_j \Lambda_j\Big\}.
        \end{equation}
    \end{definition} 
    Note that the different Kraus representations, $\Lambda_j(\rho)=M_j\rho M_j^\dag$, over which one has to minimize in Eq.~\eqref{eq:1s-Schmidt} are related by a unitary mixing, $K_j=\sum_{k} u_{jk} M_k$ such that $\Lambda(\rho) = \sum_j K_j \rho K_j^\dag$.
    Here, $u_{jk}$ are the components of an isometry, i.e., $\sum_{i} u_{ij}^* u_{ik} = \delta_{jk}$.
    Thus, determining the channel Schmidt number $r(\Lambda)$ amounts to the formidable computational problem of exploring all possible isometries.
    
    For a channel with one Kraus operator, the channel Schmidt number~\eqref{eq:1s-Schmidt} and the channel Schmidt rank~\eqref{eq:lhs} coincide.
    Moreover, due to the minimum in Definition \ref{def:1s-Schmidt}, $r$ is convex, i.e.,
    \begin{equation}
        r(\Lambda)\leq \sum_j r(\Lambda_j),
    \end{equation}
    for $\Lambda =\sum_j \Lambda_j$.
    Given a channel $\Lambda(\rho)=\sum_j M_j \rho M_j^\dag$, its dual map $\Lambda^*(\rho)=\sum_j M_j^\dag \rho M_j$ generally has a different channel Schmidt number, i.e., $r(\Lambda)\neq r(\Lambda^*)$.
    This is expected as the dual map of a non-entangling map $\Lambda$ can, in general, be entangling \cite{CV20}.
    For example, $\Lambda(\rho)=\Tr(\phi^+_k \rho)\mathbb{1}$ is non-entangling while $\Lambda^*(\rho)=\Tr(\rho)\phi^+_k$ is entangling.
    Finally, we conclude the following Theorem.
    \begin{theorem}
        \label{th:ne}
        A channel $\Lambda$ is stochastically non-entangling if and only if $r(\Lambda)=1$.
    \end{theorem}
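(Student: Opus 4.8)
The plan is to reduce both implications to a single equivalence for rank-1 channels and then read off the theorem from the two governing definitions. Concretely, for a rank-1 map $\Lambda_j(\rho)=M_j\rho M_j^\dag$ I would establish that the three statements ``$\Lambda_j$ is non-entangling,'' ``$M_j$ sends product states to product states,'' and ``$r(\Lambda_j)=1$'' are mutually equivalent. The text already notes one direction of this (a rank-1 non-entangling map has $r(\Lambda_j)=1$); closing the loop is the substantive part, after which the forward and backward implications are bookkeeping with Definitions~\ref{def:ne-maps} and~\ref{def:1s-Schmidt}.

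First I would prove the rank-1 equivalence. The link $r(\Lambda_j)=1 \iff M_j\ket{\psi}$ is a product vector for every $\psi\in\mathcal{S}_0$ is immediate from the channel Schmidt rank in Eq.~\eqref{eq:lhs}, since a pure state has Schmidt rank one exactly when it is a product state. To tie this to being non-entangling, I would observe that on a pure product input the rank-1 channel returns the pure output $(M_j\ket{\psi})(M_j\ket{\psi})^\dag$, which is separable precisely when $M_j\ket{\psi}$ is again a product vector; extending to separable mixtures by convexity then shows that $M_j$ preserving product states is equivalent to $\Lambda_j$ being non-entangling.

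For the forward implication, suppose $\Lambda$ is stochastically non-entangling. By Definition~\ref{def:ne-maps} there is a decomposition $\Lambda=\sum_j\Lambda_j$ with each rank-1 piece $\Lambda_j$ non-entangling, so the equivalence gives $r(\Lambda_j)=1$ for all $j$. Feeding this decomposition into the minimization of Definition~\ref{def:1s-Schmidt} yields $r(\Lambda)\le\max_j r(\Lambda_j)=1$, and since $r(\Lambda)\ge 1$ always holds, $r(\Lambda)=1$. For the converse, assume $r(\Lambda)=1$. The minimum in Eq.~\eqref{eq:1s-Schmidt} is attained by some decomposition $\Lambda=\sum_j\Lambda_j$ with $\max_j r(\Lambda_j)=1$, so every term satisfies $r(\Lambda_j)=1$; by the equivalence each $\Lambda_j$ is non-entangling, which is exactly the statement that $\Lambda$ is stochastically non-entangling.

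The main obstacle I anticipate is the ``non-entangling $\Rightarrow$ product-preserving'' step of the rank-1 equivalence: one must argue that separability-preserving behaviour on all separable inputs forces the single Kraus operator to map every pure product vector to a product vector, while handling the degenerate case $M_j\ket{\psi}=0$ and justifying that testing pure product inputs suffices rather than all separable states. Once that is settled, the remainder is a direct application of the two definitions.
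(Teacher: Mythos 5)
Your proposal is correct and follows essentially the same route as the paper's proof: both directions reduce to decomposing $\Lambda$ into rank-1 pieces and invoking the equivalence between ``$r(\Lambda_j)=1$,'' ``$M_j$ preserves product states,'' and ``$\Lambda_j$ non-entangling.'' The only difference is that you isolate and justify this rank-1 equivalence (including the separability-forces-product-output step and the $M_j\ket{\psi}=0$ caveat) as an explicit lemma, whereas the paper treats it as already established by the surrounding discussion of Eq.~\eqref{eq:prod-to-prod} and Eq.~\eqref{eq:lhs}.
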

    \begin{proof}
        If $\Lambda$ is stochastically non-entangling, then it can be expressed through non-entangling rank-1 channels $\Lambda_j$, i.e., $\Lambda=\sum_j \Lambda_j$, with $r(\Lambda_j)=1$.
        Thus, the minimum in Eq. \eqref{eq:1s-Schmidt} is satisfied.
        It follows that $r(\Lambda)=1$.

        Now suppose that $\Lambda$ has minimal channel Schmidt number, $r(\Lambda)=1$.
        Then, it can be written as $\Lambda(\rho)=\sum_j\Lambda_j(\rho)$, with $\Lambda_j(\rho)=M_j \rho M_j^\dag$, so that $\max_j r(\Lambda_j)=1$ holds true.
        Hence, $r(\Lambda_j)=1$ for all $j$. 
        Thus, $M_j$ preserves product states, and, thus, it is non-entangling.
        Finally, as $\Lambda$ is a convex sum of non-entangling rank-1 channels $\Lambda_j$, $\Lambda$ is stochastically non-entangling.
    \end{proof}

    \subsection{Comparison to entanglement-generating measures}
    \label{ssec:ent-gen-measure}

    Now, we compare the channel Schmidt number to other quantifiers, such as the so-called, entanglement-generating measures \cite{TGW14,LBL20,ZGY22}.
    Here, we are primarily interested in measures of the form
    \begin{equation}
    	\label{eq:ent-gen-Schmidt}
    	\hat{r}(\Lambda)=\max_{\sigma\in\mathcal{S}}r(\Lambda(\sigma)),
    \end{equation}
    where $r(\Lambda(\sigma))$ is the Schmidt number [Eq.~\eqref{eq:Schmidt-for-state}] of the generally mixed state $\Lambda(\sigma)$.
    Unlike the channel Schmidt number, the measure $\hat{r}$ is minimal if and only if $\Lambda$ is non-entangling, which is evident from Eq.~\eqref{eq:ent-gen-Schmidt}.
    It does not capture the potential of a channel $\Lambda$ to generate entanglement probabilistically.
    This has its origin in the fact that the channel Schmidt number was obtained from a convex-roof construction of channels while $\hat{r}$ originates from a convex-roof construction on the set of states.

    \subsection{Comparison to Choi-based measures}
    \label{ssec:Choi-measure}

    Another way to discuss the entanglement of a channel refers to the entanglement present in its Choi matrix \cite{C75}
    \begin{equation}
        J_{\Lambda}=(\id_{R_1\dots R_N}\otimes\Lambda)(\phi^+_{R_1 1}\otimes \dots \otimes \phi^+_{R_n n}),
    \end{equation}
    where $\ket{\phi^+}=\tfrac{1}{\sqrt{d}}\sum_{b=0}^{d-1}\ket{bb}$ is the maximally entangled state, $\id(\rho)=\rho$ denotes the identity channel, and $R_a$ is a $d$-dimensional reference system.

    One defines an entanglement measure on channels via the Schmidt number $r(J_{\Lambda})$ of a channel's Choi matrix $J_\Lambda$ \cite{CK06,MGM24}.
    To illustrate this, consider two $d$-dimensional systems and consider the replacement channel $\Lambda(\rho)=\Tr(\rho)\phi_k^+$, with $\ket{\phi_k^+}=\tfrac{1}{\sqrt{k}}\sum_{a}\ket{aa}$ having Schmidt rank $k\geq 2$. 
    Clearly, the channel is entangling, and its Choi matrix 
    \begin{equation}
        J_{\Lambda}=\frac{1}{d^2}\mathbb{1}_{R_1}\otimes\mathbb{1}_{R_2}\otimes \phi_k^+
    \end{equation}
    has the Schmidt number $r(J_\Lambda)=k$. 
    Compare this to the swap operation, $V\ket{ab}=\ket{ba}$.
    Despite being non-entangling, its Choi matrix 
    \begin{equation}
        J_{V}=\phi^+_{R_1 2}\otimes \phi^+_{R_2 1}
    \end{equation}
    is entangled along the $R_1 1:R_2 2$ partition and it has the Schmidt number $r(J_V)=d$. 
    
    Even though the swap operation is non-entangling, its Choi matrix has a larger Schmidt number than the entangling channel (if $k<d$), which shows that the Schmidt number has a nontrivial relation to the entanglement of a channel. 
    This is a consequence of the fact that while $V$ is non-entangling, it is not completely non-entangling \cite{CV20} in the sense that it can distribute---or reshuffle---entanglement among parts of a larger system. 
    In conclusion, the Choi-based measure $r(J_{\Lambda})$ is a measure of entanglement distribution not entanglement generation.

    Compare this to the channel Schmidt number (Definition~\ref{def:1s-Schmidt}).
    The swap operation $V$ is stochastically non-entangling, thus $r(V)=1$. 
    By contrast, the channel $\Lambda(\rho)=\Tr(\rho)\phi_k^+$, with $\phi_k^+$ having Schmidt rank $k$, gives rise to Kraus operators $M_j=\ket{\phi_k^+}\bra{\xi_j}$.
    Here, $\{\ket{\xi_j}\}_j$ can be any orthonormal basis of the bipartite system, owing to the fact that the Kraus representation is not unique. 
    Maximizing over pure product states yields
    \begin{equation}
        r(M_j)=\max_{\psi\in\mathcal{S}_0}r\big(\ket{\phi_k^+}\braket{\xi_j|\psi}\big)=k,
    \end{equation}
    for the channel Schmidt rank.
    The result is independent of the choice of basis, and we obtain $r(\Lambda)=k$.
    
    In summary, the channel Schmidt number is a measure able to identify the potential of a channel to create entanglement.
    It neither depends on the Choi matrix nor on any chosen notion of distance.
    It only uses the algebraic convexity of the set of stochastically non-entangling maps $\mathcal{N}_\ms$.
    It unambiguously resolves the shortcoming of the Schmidt number of the Choi matrix at distinguishing entangling and non-entangling maps.

    \section{Entanglement witnesses under dual maps of non-entangling maps}
    \label{sec:witness}

    So far, our analysis of non-entangling maps was based on their action on quantum states.
    In contrast, the dual map $\Lambda^*$ of a channel $\Lambda$ is defined by $\Tr(\Lambda^*(O) \rho)=\Tr(O\Lambda(\rho))$ and describes changes in a physical observable $O$. 
    While one cannot assign a notion of separability to a general observable (Hermitian operator), non-entangling maps can still be characterized via their dual maps.
    In this section, we show that a channel $\Lambda$ is non-entangling if and only if its dual map $\Lambda^*$ takes witnesses to witnesses.
    \begin{definition}
        \normalfont
        A Hermitian operator $W$ is a witness if
        \begin{equation}
    	    \label{eq:wit-def}
            \forall \sigma\in\mathcal{S}:\quad\braket{W}_\sigma\geq 0,
        \end{equation}
        where $\braket{W}_\sigma=\Tr(W\sigma)$ denotes the expectation value.
    \end{definition}
    A witness is called proper if there is at least one $\rho$ such that $\braket{W}_\rho< 0$ certifies its entanglement.
    If a witness is improper it is said to be trivial, and every positive semidefinite operator $W\geq 0$ is a trivial witness.
    For $W$ to be a witness, it suffices to verify Eq. \eqref{eq:wit-def} for pure product states.
    Thus, the set of witnesses is
    \begin{equation}
    	\mathcal{W}=\{W\,|\,\forall \ket{\psi}\in\mathcal{S}_0:\braket{\psi|W|\psi}\geq 0\}.
    \end{equation}
    The following theorem identifies non-entangling maps as precisely those, whose dual maps are \textit{witness preserving}.
    \begin{theorem}
        A channel $\Lambda$ is non-entangling if and only if 
        \begin{equation}
            \forall W\in\mathcal{W}:\quad\Lambda^*(W)\in\mathcal{W}.
        \end{equation}
    \end{theorem}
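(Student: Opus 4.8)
The plan is to build everything on the single algebraic identity that defines the dual map, $\Tr(\Lambda^*(W)\sigma)=\Tr(W\Lambda(\sigma))$, which converts a statement about $\Lambda$ acting on states into one about $\Lambda^*$ acting on witnesses. I would note at the outset that $\Lambda^*$ automatically preserves Hermiticity: if $\Lambda(\rho)=\sum_j M_j\rho M_j^\dag$ then $\Lambda^*(W)=\sum_j M_j^\dag W M_j$ is Hermitian whenever $W$ is. Hence the only nontrivial content of the claim ``$\Lambda^*(W)\in\mathcal{W}$'' is the sign condition $\braket{\psi|\Lambda^*(W)|\psi}\geq 0$ on product states $\ket{\psi}\in\mathcal{S}_0$.

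For the forward direction, I would fix a witness $W\in\mathcal{W}$ and an arbitrary separable state $\sigma\in\mathcal{S}$, and evaluate $\Tr(\Lambda^*(W)\sigma)=\Tr(W\Lambda(\sigma))$. Because $\Lambda$ is non-entangling, $\Lambda(\sigma)\in\mathcal{S}$, and because $W$ is a witness this expectation value is $\geq 0$. Since $\sigma$ was arbitrary in $\mathcal{S}$, the operator $\Lambda^*(W)$ is nonnegative on every separable state and is therefore itself a witness. This direction is essentially immediate once the dual identity is in hand.

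For the converse I would argue by contraposition: assume $\Lambda$ is entangling, so there exists a separable $\sigma_0\in\mathcal{S}$ with $\Lambda(\sigma_0)\notin\mathcal{S}$. The crucial input is the completeness of entanglement witnesses—because $\mathcal{S}$ is closed and convex, the Hahn--Banach separation theorem furnishes a witness $W_0\in\mathcal{W}$ detecting the entangled state $\Lambda(\sigma_0)$, i.e.\ $\Tr(W_0\Lambda(\sigma_0))<0$. Pushing this back through the dual identity yields $\Tr(\Lambda^*(W_0)\sigma_0)=\Tr(W_0\Lambda(\sigma_0))<0$ with $\sigma_0\in\mathcal{S}$, so $\Lambda^*(W_0)\notin\mathcal{W}$ and $\Lambda^*$ fails to be witness-preserving.

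The main obstacle, and the only step that needs more than the dual identity, is invoking this witness-existence (separating-hyperplane) result in the converse. I would cite it as a standard fact, but would take care that the separating functional can genuinely be realized as a Hermitian operator obeying $\braket{W_0}_\sigma\geq 0$ on \emph{all} of $\mathcal{S}$; this follows from the self-duality of the trace pairing on the real vector space of Hermitian operators together with the closedness and convexity of the separable set. With that in place, both implications close and the equivalence follows.
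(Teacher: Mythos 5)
Your proposal is correct and follows essentially the same route as the paper: the forward direction is the direct computation via the dual identity, and the converse invokes the completeness of entanglement witnesses (there the paper argues by contradiction rather than contraposition, which is the same maneuver). Your added remarks on Hermiticity preservation and on justifying the separating-hyperplane step are sound but not needed beyond what the paper already assumes.
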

    \begin{proof}
        Let $\sigma\in\mathcal{S}$ be separable and let $\Lambda$ be non-entangling.
        It holds that $\braket{W}_{\Lambda(\sigma)}\geq 0$, because $\Lambda(\sigma)$ is separable.
        Therefore, $\braket{W}_{\Lambda(\sigma)}=\braket{\Lambda^*(W)}_\sigma\geq 0$ implies that $\Lambda^*(W)$ is a witness.
        
        Conversely, let $\Lambda^*$ be a channel such that $\Lambda^*(W)$ is a witness for all $W\in\mathcal{W}$.
        This implies $\braket{\Lambda^*(W)}_\sigma\geq 0$ for any separable state $\sigma$,
        thus $\braket{W}_{\Lambda(\sigma)}\geq 0$. 
        By way of contradiction, suppose $\Lambda$ was entangling.
        Then, there exist a separable state $\sigma^\prime$ such that $\Lambda(\sigma^\prime)$ is entangled.
        Since for every entangled state $\Lambda(\sigma^\prime)$ there exists a witness $W^\prime$ detecting the state's entanglement, we must have that $\braket{W^\prime}_{\Lambda(\sigma^\prime)}<0$ holds true.
        However, this is in contradiction with the previous observation that $\braket{W}_{\Lambda(\sigma)}\geq 0$.
        Thus, $\Lambda$ must be non-entangling, proving the assertion.
    \end{proof}

    From this viewpoint, non-entangling maps are those whose dual map $\Lambda^*$ takes witnesses to witnesses. 
    An important subset of non-entangling maps are entanglement-annihilating channels \cite{MZ10,FRZ12}.
    Formally, $\Lambda$ is entanglement annihilating if $\Lambda(\rho)$ is separable for all states $\rho$.
    The dual map $\Lambda^*$ of an entanglement-annihilating channel produces a trivial witness, as summarized by the following proposition.
    \begin{proposition}
        \label{prop:EA}
        \normalfont
        A channel $\Lambda$ is entanglement annihilating if and only if $\Lambda^*(W)\geq 0$ for all witnesses $W$.
    \end{proposition}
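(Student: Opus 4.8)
The plan is to prove both implications directly, exploiting the defining relation of the dual map together with the witness characterization of separability. The key observation is that the condition $\Lambda^*(W)\geq 0$ is strictly stronger than merely $\Lambda^*(W)\in\mathcal{W}$: positive semidefiniteness requires $\braket{\psi|\Lambda^*(W)|\psi}\geq 0$ for \emph{every} state $\ket{\psi}$, whereas the witness property only demands this for product states. This asymmetry mirrors the fact that entanglement annihilation is strictly stronger than being non-entangling, and it is precisely what makes the resulting dual witness trivial.

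For the forward direction, I would assume $\Lambda$ is entanglement annihilating and fix an arbitrary witness $W$. For any pure state $\ket{\psi}$ on the input space, the dual relation gives $\braket{\psi|\Lambda^*(W)|\psi}=\Tr(\Lambda^*(W)\ket{\psi}\bra{\psi})=\Tr(W\Lambda(\ket{\psi}\bra{\psi}))$. Since $\Lambda$ annihilates entanglement, $\Lambda(\ket{\psi}\bra{\psi})$ is separable, and the witness property of $W$ forces this expectation value to be non-negative. As $\ket{\psi}$ was arbitrary, $\Lambda^*(W)\geq 0$ follows.

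For the converse, I would assume $\Lambda^*(W)\geq 0$ for all witnesses $W$ and show that $\Lambda(\rho)$ is separable for every state $\rho$. For any state $\rho$ and any witness $W$, the dual relation combined with positivity yields $\braket{W}_{\Lambda(\rho)}=\Tr(\Lambda^*(W)\rho)\geq 0$, since the product of two positive semidefinite operators has non-negative trace. Thus $\Lambda(\rho)$ has non-negative expectation against every witness. The essential ingredient is then the completeness of the witness set, namely that a state is separable if and only if it gives a non-negative expectation value against all witnesses---a consequence of the Hahn--Banach separation theorem applied to the closed convex set $\mathcal{S}$. This immediately gives that $\Lambda(\rho)$ is separable, and since $\rho$ was arbitrary, $\Lambda$ is entanglement annihilating.

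The only nontrivial step, such as it is, lies in invoking this witness characterization of separability in the converse direction; everything else is a direct consequence of the dual-map identity. An equivalent route for the converse is the contrapositive argument already used in the preceding theorem: if $\Lambda$ were not entanglement annihilating, some output $\Lambda(\rho)$ would be entangled and hence detected by a witness $W'$ with $\braket{W'}_{\Lambda(\rho)}<0$, directly contradicting $\Tr(\Lambda^*(W')\rho)\geq 0$.
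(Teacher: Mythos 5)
Your proof is correct and follows essentially the same route as the paper's: the forward direction uses the dual-map identity plus separability of every output, and the converse uses the identity again together with the completeness of the witness set. The only difference is that you make explicit the Hahn--Banach/witness-completeness step that the paper leaves implicit in the phrase ``since $W$ can be any witness.''
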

    \begin{proof}
        Let $\Lambda$ be entanglement annihilating. 
        Then, $\Lambda(\rho)$ is separable for all $\rho$. 
        Hence, 
        \begin{equation}
            \braket{\Lambda^*(W)}_\rho=\Tr\{\Lambda(\rho) W\}\geq 0,
        \end{equation}
        where $W$ is any witness.
        Thus, $\Lambda^*(W)\geq 0$ is trivial.

        Now, let $\Lambda$ output trivial witnesses, i.e., $\braket{\Lambda^*(W)}_{\rho}\geq 0$ for all $\rho$.
        This implies $\braket{W}_{\Lambda(\rho)}\geq 0$.
        Since $W$ can be any witness, it must hold that $\Lambda(\rho)$ is separable. 
    \end{proof}

    \subsection{Finer and optimal witnesses}
    \label{ssec:finer_wit}

    Formally, a witness $W$ is finer than a witness $W^\prime$ if $W$ detects all entangled states which $W^\prime$ detects and at least one additional state;
    i.e., there is $\rho\notin\mathcal{S}$ such that
    \begin{equation}
        \label{eq:finer}
    	\braket{W}_{\rho}<0\leq \braket{W^\prime}_{\rho}.
    \end{equation}
    A witness $W$ is called optimal if there is no finer witness \cite{LK00,DL14}. 
    A witness $W$ is optimal whenever it is an extreme point of the convex set $\mathcal{W}$ \cite{SR17};
    i.e., there is a product state $\ket{\psi}\in\mathcal{S}_0$ such that $\braket{\psi|W|\psi}=0$.
    If there is a separable state $\sigma\in\mathcal{S}$, satisfying $\braket{W}_\sigma=0$, then there is also a pure product state $\ket{\psi}\in\mathcal{S}_0$ for which $\braket{W}_\psi=0$.
    We find that the dual of a non-entangling map cannot make a witness optimal.
    \begin{proposition}
        \label{prop:optimal}
        \normalfont
        Let $W$ be a witness and let $\Lambda$ be non-entangling so that $\Lambda^*(W)$ is optimal.
        Then $W$ is optimal.
    \end{proposition}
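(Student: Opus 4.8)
The plan is to run everything through the extreme-point characterization of optimality recorded just above the statement: a witness is optimal precisely when some pure product state saturates it. So I would begin from the hypothesis that $\Lambda^*(W)$ is optimal, which by that characterization supplies a pure product state $\ket{\psi}\in\mathcal{S}_0$ with $\braket{\psi|\Lambda^*(W)|\psi}=0$. The whole argument then consists of transporting this saturating state through $\Lambda$ to produce a separable state that saturates $W$ itself.

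The key step is a single application of the defining duality relation $\Tr(\Lambda^*(O)\rho)=\Tr(O\Lambda(\rho))$, taken with $O=W$ and $\rho=\ket{\psi}\bra{\psi}$:
\begin{equation*}
    0=\braket{\psi|\Lambda^*(W)|\psi}=\Tr\big(W\,\Lambda(\ket{\psi}\bra{\psi})\big)=\braket{W}_{\sigma},
\end{equation*}
where I set $\sigma:=\Lambda(\ket{\psi}\bra{\psi})$. Since $\Lambda$ is non-entangling and $\ket{\psi}\bra{\psi}$ is a pure product state, hence separable, its image $\sigma$ is separable; and because $\Lambda$ is a (trace-preserving) channel, $\sigma$ is a genuine normalized state. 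Thus $\sigma\in\mathcal{S}$ is a separable state on which $W$ vanishes.

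To conclude, I would invoke the fact stated immediately before the proposition, namely that if a witness vanishes on some separable state it also vanishes on some pure product state. Explicitly, writing $\sigma=\sum_a p_a\ket{\phi_a}\bra{\phi_a}$ as a convex mixture of pure product states with $p_a>0$, the witness property gives $\braket{\phi_a|W|\phi_a}\geq 0$ for every $a$, while $\sum_a p_a\braket{\phi_a|W|\phi_a}=\braket{W}_\sigma=0$ forces each term to vanish. Hence some $\ket{\phi_a}\in\mathcal{S}_0$ obeys $\braket{\phi_a|W|\phi_a}=0$, and by the extreme-point characterization $W$ is optimal.

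I do not expect a genuine obstacle; the content is the clean observation that $\Lambda$ carries the product state witnessing optimality of $\Lambda^*(W)$ to a separable state witnessing optimality of $W$. The only point requiring a little care is the final replacement of the \emph{mixed} separable state $\sigma$ by a \emph{pure} product state touching $W$, but this is exactly the convexity remark already established above, so it reduces to citing that statement rather than proving anything new.
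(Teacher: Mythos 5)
Your proposal is correct and follows essentially the same route as the paper's proof: extract a pure product state saturating $\Lambda^*(W)$ from its optimality, transfer it through the duality relation to obtain a separable state on which $W$ vanishes, and then invoke the convexity remark to replace that separable state by a pure product state, establishing that $W$ is extremal and hence optimal. You merely spell out the intermediate steps (separability of $\Lambda(\ket{\psi}\bra{\psi})$ and the convex-decomposition argument) that the paper leaves implicit.
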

    \begin{proof}
        Because $\Lambda^*(W)$ is optimal, there exists a pure product state $\ket{\psi}\in\mathcal{S}_0$ for which $\braket{\Lambda^*(W)}_\psi=0$.
        Therefore, $\braket{W}_{\Lambda(\psi)}=0$. 
        By convexity of $\mathcal{W}$, there also exists a pure product state $\ket{\phi}\in\mathcal{S}_0$ such that $\braket{W}_{\phi}=0$.
        Thus, $W$ is an extreme point of $\mathcal{W}$ and $W$ is optimal.
    \end{proof}

    \begin{figure}
        \centering
        \begin{tikzpicture}
        \node at (0,0) {\includegraphics[width=0.4\textwidth]{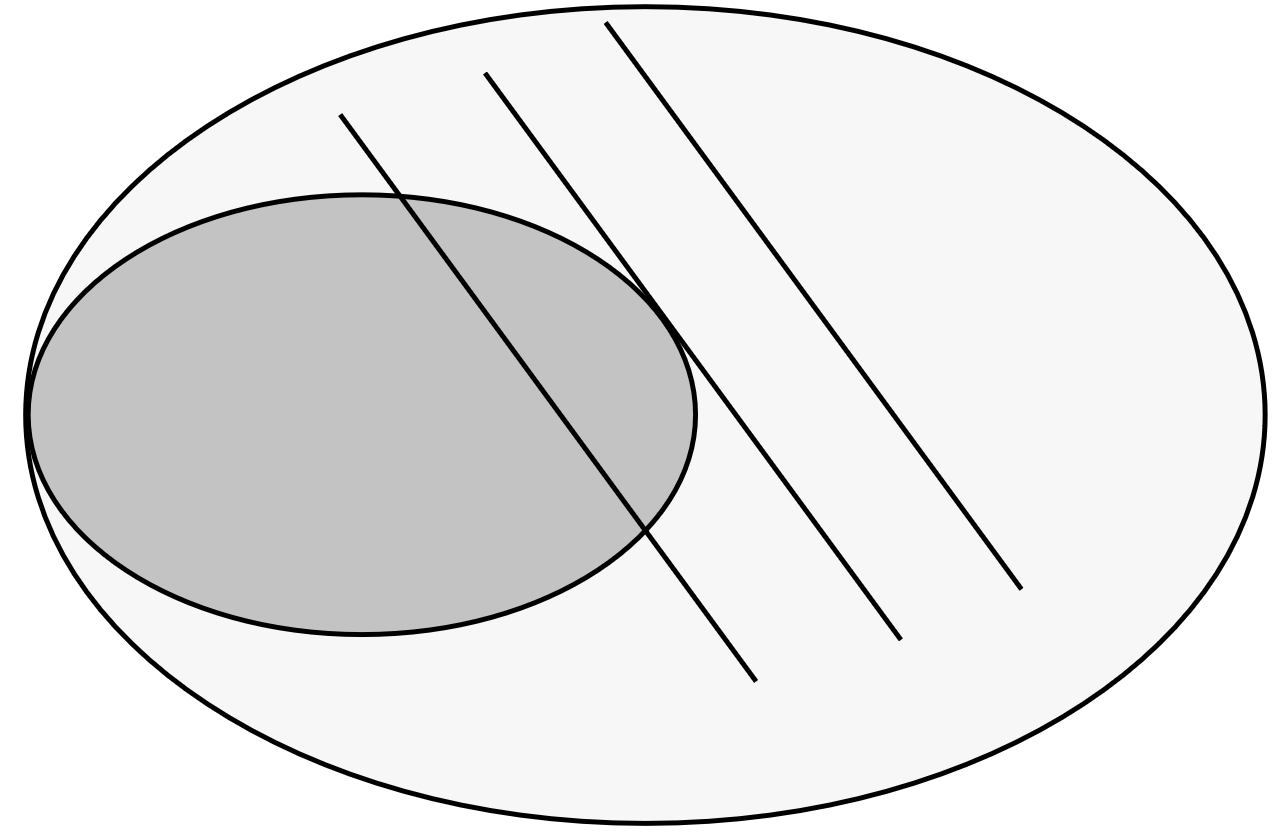}};
        \node at (-1.6,0) {$\mathcal{S}$};
        \node at (2.2,-1.2) {\footnotesize{$\Lambda^*(W)$}};
        \node at (1.4,-1.4) {\footnotesize{$W$}};
        \node at (0.5,-1.7) {\footnotesize{$\Phi^*(W)$}};
        \end{tikzpicture}
        \caption{%
            Separable states $\mathcal{S}$ (dark gray) as a subset of quantum states (light gray).
            An optimal witness $W$ corresponds to a hyperplane tangent to $\mathcal{S}$.
            The dual of a non-entangling map $\Lambda$ gives a witness $\Lambda^*(W)$ which is not finer than $W$.
            The dual of an entangling channel $\Phi$ can yield a non-witness $\Phi^*(W)$.
        }\label{fig:hyper}
    \end{figure}

    The dual of a non-entangling map cannot make a witness finer.
    To see this, note that any witness $W$ can be written as $W=\lambda\mathbb{1}^{\otimes n} - L$, with $L\geq 0$ being a generic test operator. 
    The condition for detecting entanglement then becomes $\braket{L}_{\rho}>\lambda$.
    This implies that a witness $W$ is finer than $W^\prime=\lambda^\prime\mathbb{1}^{\otimes n} - L$, whenever $\lambda^\prime> \lambda$.
    Then, $W$ is optimal for \cite{SV13}
    \begin{equation}
        \label{eq:opt-para}
        \lambda_{\mathrm{min}}=\max_{\sigma\in\mathcal{S}}\braket{L}_\sigma,
    \end{equation}
    which is the smallest $\lambda$ for which $W$ is still a witness.
    Next, consider a non-entangling map $\Lambda$ that realizes a deterministic operation; i.e., $\Lambda$ is trace preserving, $\Tr\{\Lambda(\rho)\}=\Tr\{\rho\}$.
    Its dual map $\Lambda^*$ is unital, i.e., $\Lambda^*\big(\mathbb{1}^{\otimes n}\big)=\mathbb{1}^{\otimes n}$.
    Under this map, a witness becomes $\Lambda^*(W)=\lambda\mathbb{1}^{\otimes n}-\Lambda^*(L)$.
    The non-entangling map
    \begin{equation}
        \label{eq:sep-mix}
        \Lambda(\rho)=p\rho+(1-p)\Tr(\rho)\sigma,
    \end{equation}
    amounts to mixing with a separable state $\sigma\in\mathcal{S}$.
    Its dual map reads
    \begin{equation}
        \Lambda^*(L)=pL+(1-p)\braket{L}_{\sigma}\mathbb{1}^{\otimes n},
    \end{equation}
    and we get 
    \begin{equation}
        \begin{split}
            \Lambda^*(W)&=\lambda\mathbb{1}^{\otimes n} - pL - (1-p)\braket{L}_{\sigma}\mathbb{1}^{\otimes n},\\
            &=p\left(\frac{\lambda}{p}-\frac{1-p}{p}\braket{L}_{\sigma}\right)\mathbb{1}^{\otimes n} -pL.\\
        \end{split}
    \end{equation}
    As the overall factor $p$ is irrelevant for judging whether a witness is finer, one notices that the witness $\Lambda^*(W)$ has a shifted test operator,
    \begin{equation}
        L\mapsto L + \frac{1-p}{p}\braket{L}_{\sigma}\mathbb{1}^{\otimes n}.
    \end{equation}
    Similarly, 
    \begin{equation}
        \label{eq:shift}
        \lambda \mapsto \frac{\lambda}{p}-\frac{1-p}{p}\braket{L}_{\sigma}\geq \lambda
    \end{equation}
    holds true, because of $\braket{L}_{\sigma}\leq \lambda_{\mathrm{min}}\leq \lambda$.
    Thus, $\Lambda^*(W)$ is not finer than $W$.
    For completeness, if the channel $\Lambda$ is not of the form~\eqref{eq:sep-mix}, then the witness $W$ and $\Lambda^*(W)$ cannot be compared; i.e., $\Lambda^*(W)$ is not finer than $W$ and vice versa.
    In summary, we proved the following theorem.
    \begin{theorem}
    	\label{th:not-finer}
        Let $\Lambda$ be non-entangling and trace preserving and let $W$ be a witness. 
        Then, $\Lambda^*(W)$ is not finer than $W$.
    \end{theorem}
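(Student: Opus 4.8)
The plan is to reduce the comparison of two witnesses to a comparison of scalar thresholds, and then to exploit the fact that trace preservation forces the dual map to be unital. First I would write every witness in the canonical form $W = \lambda\mathbb{1}^{\otimes n} - L$ with $L \ge 0$, so that $W$ detects a state $\rho$ exactly when $\langle L\rangle_\rho > \lambda$. With this parametrization the notion of being \emph{finer} becomes transparent: for a fixed test operator $L$, the witness $\lambda\mathbb{1}^{\otimes n} - L$ is finer precisely when $\lambda$ is smaller, and the smallest admissible (optimal) value is $\lambda_{\min} = \max_{\sigma\in\mathcal{S}}\langle L\rangle_\sigma$. Since incomparability of detection sets also qualifies as ``not finer,'' it suffices to show that $\Lambda^*(W)$ never detects a strict superset of the states detected by $W$.

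Next I would invoke the hypotheses. Trace preservation of $\Lambda$ is equivalent to unitality of the dual, $\Lambda^*(\mathbb{1}^{\otimes n}) = \mathbb{1}^{\otimes n}$, so that the identity part of $W$ is left untouched and $\Lambda^*(W) = \lambda\mathbb{1}^{\otimes n} - \Lambda^*(L)$; the entire effect of the channel is absorbed into the transformed test operator $\Lambda^*(L)$. Moreover, because $\Lambda$ is non-entangling, the preceding witness-preservation theorem guarantees that $\Lambda^*(W)$ is again a witness, so the comparison is well posed.

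The core step is to carry this out for the family of non-entangling mixing channels $\Lambda(\rho) = p\rho + (1-p)\Tr(\rho)\sigma$ with $\sigma\in\mathcal{S}$, whose dual reshapes $L$ into $pL + (1-p)\langle L\rangle_\sigma\mathbb{1}^{\otimes n}$. Here $\Lambda^*(L)$ is $L$ plus a multiple of the identity, so the transformed witness keeps the same test direction $L$ (up to the irrelevant overall factor $p$) but carries the shifted threshold $\lambda\mapsto \lambda/p - \tfrac{1-p}{p}\langle L\rangle_\sigma$. I would then show this shift only increases the threshold by invoking $\langle L\rangle_\sigma \le \lambda_{\min} \le \lambda$; a one-line rearrangement reduces the desired inequality to $\lambda \ge \langle L\rangle_\sigma$. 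A larger threshold means $\Lambda^*(W)$ detects a subset of the states detected by $W$, so $\Lambda^*(W)$ is not finer than $W$.

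Finally, for a general non-entangling trace-preserving $\Lambda$ not of the mixing form, $\Lambda^*(L)$ is no longer a rescaling of $L$ plus a multiple of the identity, so the two witnesses point along genuinely different test directions; I would argue that their detection regions are then incomparable, which again yields ``not finer.'' I expect this last step to be the main obstacle: the aligned-test-operator case is a clean threshold comparison, but establishing that a generic $\Lambda^*(L)$ produces an incomparable rather than an accidentally finer witness requires ruling out containment of one detection region in the other, and is the part that most needs care rather than the explicit computation.
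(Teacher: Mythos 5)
Your proposal follows essentially the same route as the paper's own argument: the canonical parametrization $W=\lambda\mathbb{1}^{\otimes n}-L$ with detection condition $\langle L\rangle_\rho>\lambda$, unitality of $\Lambda^*$ from trace preservation, the explicit threshold-shift computation for the mixing channel $\Lambda(\rho)=p\rho+(1-p)\Tr(\rho)\sigma$ using $\langle L\rangle_\sigma\le\lambda_{\min}\le\lambda$, and an appeal to incomparability for all remaining channels. The final step you flag as the main obstacle is in fact also the weak point of the paper's proof, which simply asserts that for channels not of the mixing form the witnesses $W$ and $\Lambda^*(W)$ ``cannot be compared,'' without ruling out accidental containment of one detection region in the other.
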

    \hfill\qedsymbol
    
    From a geometric viewpoint, entanglement witnesses correspond to hyperplanes separating a subset of entangled states from separable ones \cite{HHH09}.
    In this sense, witnesses describe covectors (dual vectors) to state vectors $\rho$.
    Theorem \ref{th:not-finer} shows that the dual map of a non-entangling map cannot correspond to a parallel displacement of a hyperplane towards the set of separable states, Fig. \ref{fig:hyper}.
    In contrast, consider the channel $\Phi(\rho)=p\rho+(1-p)\omega$, with $\omega\notin \mathcal{S}$ being entangled.
    If $\omega$ is detected by $L$, i.e., $\braket{L}_\omega>\lambda$, its dual map $\Phi^*$ creates a shifted operator $\Phi^*(W)=\lambda^\prime \mathbb{1}^{\otimes n}-L$, with
    \begin{equation}
        \lambda^\prime = \frac{\lambda}{p}-\frac{1-p}{p}\braket{L}_{\omega}< \lambda.
    \end{equation}
    For $\lambda_{\mathrm{min}}\leq \lambda^\prime <\lambda$, $\Phi^*(W)$ is finer than $W$.
    For $\lambda^\prime \leq \lambda_{\mathrm{min}}$, $\Phi^*(W)$ is not even a witness anymore because it intersects the set of separable states, cf. Fig. \ref{fig:hyper}.

    \subsection{Schmidt-number witnesses}
    \label{ssec:Schmidt-witness}
    
    Even though non-entangling maps can increase the entanglement present in a specific state (Example~\ref{ex:non-mono}), their dual map fails to make a witness $W$ finer (Theorem \ref{th:not-finer}).
    This marks a stark difference between quantifying entanglement via the Schmidt number and approaches using witnesses.
    To further explore this apparent discrepancy, we consider so-called Schmidt-number witnesses \cite{SBL01}. 
    \begin{definition}
        \normalfont
        A witness $W_k$ is of Schmidt class $k$ iff
        \begin{equation}
            \begin{split}
                \forall \sigma \in \mathcal{S}^{k-1}:& \quad \braket{W_k}_{\sigma}\geq 0,\\
                \exists \rho \in \mathcal{S}^{k}:&\quad \braket{W_k}_\rho<0.\\
            \end{split} 
        \end{equation}
    \end{definition}
    For example, $W_3=\tfrac{2}{3}\mathbb{1}^{\otimes 2}-\phi^+_3$ is of Schmidt class $3$.
    Now consider a non-entangling map $\Lambda$ that increases the Schmidt number of a state $\sigma\in\mathcal{S}^{k-1}$ from $k-1$ to $k$, i.e., $\Lambda(\sigma)\in\mathcal{S}^k$.
    Such channels exist, cf. Example~\ref{ex:non-mono} as well as Ref. \cite{CV20}.
    For such a channel, it follows that
    \begin{equation}
        \braket{\Lambda^*(W_k)}_{\sigma}=\braket{W_k}_{\Lambda(\sigma)}<0. 
    \end{equation}
    Hence, $\Lambda^*(W_k)$ is not a witness of Schmidt class $k$.
    We stress that this finding does not contradict Theorem \ref{th:not-finer}.
    Even though, $\Lambda^*(W_k)$ detects a state which is not detected by $W_k$, the witnesses are not comparable in the sense of one being finer than the other.
    From a geometric viewpoint, the hyperplanes of $W_k$ and $\Lambda^*(W_k)$ would intersect at some point; see Fig.~\ref{fig:hyper}.
    This is so because $\Lambda$ may increase the Schmidt number, but it cannot simultaneously increase all Schmidt coefficients.
    For this case, Theorem \ref{th:not-finer} rather implies that there is no non-entangling map $\Lambda$ such that $\Lambda(\phi^+_{k-1})=\phi^+_{k}$.

    \subsection{Positive but not completely-positive maps}
    \label{ssec:PNCP}
    
    The fact that the dual of a non-entangling map $\Lambda$ cannot make a witness finer (Theorem \ref{th:not-finer}) is related to $\Lambda$ being a completely positive map.
    Notably, one could relax Definition \ref{def:ne-maps} of a non-entangling map to arbitrary positive maps. 
    For example, the partial transpose $\rho^{\Gamma}$ of a state $\rho$ is a positive, but not completely-positive map \cite{P96,HH96}.
    For every separable state $\sigma\in\mathcal{S}$, we have $\sigma^\Gamma\in\mathcal{S}$ is again separable.
    In this generalized setting, $\Lambda(\rho)=\rho^\Gamma$ may be viewed as a non-entangling map, despite $\Lambda(\rho)$ being, in general, not a quantum state when $\rho$ is entangled. 
    Notably, $\Lambda$ transforms the trivial witness $\ket{\psi}\bra{\psi}$, for $\ket{\psi}\notin\mathcal{S}$ being entangled, into an optimal witness, $\ket{\psi}\bra{\psi}^{\Gamma}$.
    Clearly, the premises of Theorem \ref{th:not-finer} are not met, and the theorem is not applicable.  

    \section{Bounds on non-entangling maps}
    \label{sec:Bell-like}

    Consider an experimenter who prepares a separable state $\sigma$ that undergoes a process $\Lambda$. 
    Subsequently, $\Lambda(\sigma)$ is subject to a measurement of a witness $W$.
    If the average measurement outcome is negative, $\braket{W}_{\Lambda(\sigma)}<0$, then $\Lambda$ is verified to be entangling.
    In the following, we illustrate this procedure for both measurement channels and random unitary channels, describing complementary concepts \cite{BSS16}.

    \subsection{Entanglement generation by measurement channels}
    \label{ssec:meas-channels}

    Consider the class of so-called measurement channels,
    \begin{equation}
        \label{eq:meas-channel}
        \Lambda(\rho) = \sum_j\Tr(E_j \rho)\rho_j,
    \end{equation}
    where $\rho_j$ are quantum states and $E_j\geq 0$ are measurement effects forming a positive operator-valued measure (POVM), i.e., $\sum_j E_j=\mathbb{1}^{\otimes n}$. 
    The channel prepares the state $\rho_j$ with probability $\Tr(E_j \rho)$.
    The dual map of $\Lambda$ is 
    \begin{equation}
        \Lambda^*(\rho)=\sum_j \Tr(\rho_j \rho) E_j.
    \end{equation}
    If $\Lambda$ is non-entangling, the action of $\Lambda^*$ on a witness $W$ leads to the inequality
    \begin{equation}
        \label{eq:WitIneq}
        \braket{\Lambda^*(W)}_{\sigma}=\sum_j \braket{E_j}_{\sigma} \braket{W}_{\rho_j}\geq 0,
    \end{equation}
    for any separable state $\sigma\in\mathcal{S}$.
    In contrast, if Eq. \eqref{eq:WitIneq} is violated for some $\sigma\in\mathcal{S}$, we conclude that $\Lambda$ is entangling.
    Then, $W$ detects $\Lambda$'s entangling power.
    Clearly, $\Lambda$ is non-entangling if all $\rho_j$ are separable, i.e., $\braket{W}_{\rho_j}\geq 0$. 
    
    For concreteness, set $n=2$ and consider states
    \begin{equation}
        \label{eq:meas-ex}
        \begin{split}
            \rho_{0} & = p \ket{\psi^{-}}\bra{\psi^{-}} + (1-p) \frac{1}{4}\mathbb{1}^{\otimes 2},\\
            \rho_{1} & = q \ket{\psi^{-}}\bra{\psi^{-}} + (1-q)\ket{\psi^{+}}\bra{\psi^{+}},\\
        \end{split}
    \end{equation}
    where $\ket{\psi^{\pm}}=(\ket{01}\pm\ket{10})/\sqrt{2}$ and $0\leq p,q\leq 1$ are probabilities, while the measurement operators are
    \begin{equation}
        \label{eq:effect-ex}
        E_{0}  = \ket{\psi^{-}}\bra{\psi^{-}},\quad E_{1}=\mathbb{1}^{\otimes 2}-\ket{\psi^{-}}\bra{\psi^{-}}.
    \end{equation}
    Further, consider the swap operator $V$ as the witness.
    Figure \ref{fig:ent-test} shows a plot of
    \begin{equation}
        \label{eq:minimize}
        \braket{\Lambda^*(V)}_{\mathrm{min}}=\min_{\sigma\in\mathcal{S}}\braket{\Lambda^*(V)}_{\sigma},
    \end{equation}
    as a function of $p$ and $q$.
    Details on the minimization in Eq.~\eqref{eq:minimize} are given in Appendix \ref{app:mini}.
    The darker shaded regions in Fig.~\ref{fig:ent-test} mark a negativity in the mean value, verifying that $\Lambda$ is entangling for these values of $p$ and $q$.
    Notice the kink in the $\braket{\Lambda^*(V)}_{\mathrm{min}}=0$ contour line at $(q,p)=(\tfrac{1}{2}, \tfrac{1}{3})$, at which both $\rho_0$ and $\rho_1$ are \textit{barely} separable. 
    For larger values of $p$ and $q$, $\Lambda$ is detected to be entangling because of the entanglement in $\rho_0$ or $\rho_1$.
    Finally, we stress that one cannot conclude whether $\Lambda$ is entangling from its Choi matrix $J_{\Lambda}$ being entangled as this merely signify that the channel is not separable; i.e., its Kraus operators cannot be written as simple tensor products; recall Sec.~\ref{ssec:Choi-measure} and that the swap channel $V$ is non-entangling but $J_V$ is entangled.
    This further motivates the treatment via entanglement witnesses, devised here.

    \begin{figure}
        \centering
        \begin{tikzpicture}
        \node at (0,0) {\includegraphics[width=0.52\textwidth]{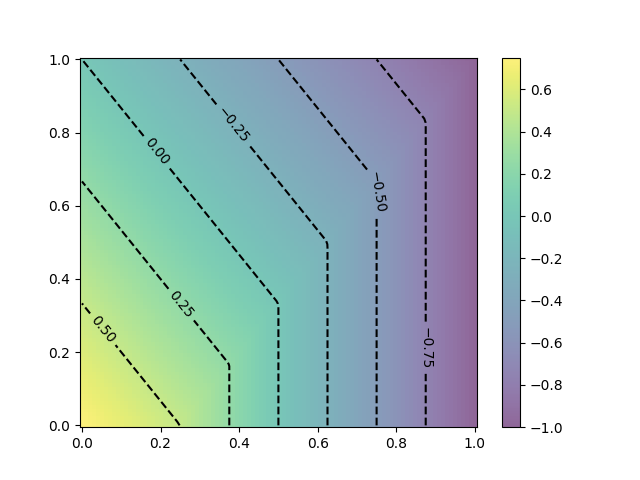}};
        \node at (-0.6,2.9) {Entanglement test};
        \node at (3,2.9) {\footnotesize{$\braket{\Lambda^*(V)}_{\mathrm{min}}$}};
        \node at (-0.6,-3.1) {$q$};
        \node[rotate=0] at (-3.9,0) {$p$};
        \end{tikzpicture}
        \caption{%
            Entanglement test in Eq.~\eqref{eq:minimize} for the witness $\Lambda^*(V)$, with the measurement channel $\Lambda$ in Eq.~\eqref{eq:meas-channel} with operators given in Eqs.~\eqref{eq:meas-ex} and~\eqref{eq:effect-ex}.
            The plot shows the minimum of $\braket{\Lambda^*(V)}_{\sigma}$ as a function of the probabilities $p$ and $q$.
        }\label{fig:ent-test} 
    \end{figure}

    \subsection{Entanglement generation by random unitary channels}
    \label{ssec:random-unitary}

    Another important class of quantum channels are random unitary channels \cite{AS08},
    \begin{equation}
        \label{eq:rand-uni}
        \Lambda(\rho)=\sum_a p_a U_a \rho U_a^\dag.
    \end{equation}
    Here the unitary $U_a$ is applied with probability $p_a\geq 0$.
    Its dual map reads 
    \begin{equation}
        \Lambda^*(\rho)=\sum_a p_a U_a^\dag\rho U_a.
    \end{equation}
    If $\Lambda$ is non-entangling, the action of $\Lambda^*$ on a witness $W$ yields the inequality
    \begin{equation}
        \label{eq:Bell-rand-uni}
        \sum_a p_a\braket{U_a^\dag W U_a}_{\sigma}\geq 0,
    \end{equation}
    which is satisfied by any separable state $\sigma\in\mathcal{S}$.
    If Eq. \eqref{eq:Bell-rand-uni} is violated by some $\sigma\in\mathcal{S}$, then $\Lambda$ is entangling.

    For illustration, let $n=2$ and consider unitaries 
    \begin{equation}
        \label{eq:ex-rand-uni}
        U_0 = \mathbb{1}^{\otimes 2},\quad U_p=\mathrm{C}X,\quad U_q=\mathbb{1}\otimes X,
    \end{equation}
    occurring with probability $(1-p-q)$, $p$, and $q$, respectively.
    Here, $X=\ket{0}\bra{1}+\ket{1}\bra{0}$ denotes the Pauli-$X$ matrix and 
    $\mathrm{C}X=\ket{0}\bra{0}\otimes \mathbb{1}+\ket{1}\bra{1}\otimes X$ is the controlled-$X$ gate.
    Further consider the witness $W=\tfrac{4}{5}\mathbb{1}^{\otimes 2}-\phi^+$. 
    Figure \ref{fig:ent-test-rand} shows a plot of $\braket{\Lambda^*(W)}_{\mathrm{min}}$ as a function of $p$ and $q$.
    The darker shaded regions mark a negativity in the mean value.
    For $q+p\leq 1$, the map $\Lambda$ is a quantum channel and thus a negativity below the diagonal of Fig.~\ref{fig:ent-test-rand} 
    verifies that $\Lambda$ is entangling for these values of $p$ and $q$. 

    \begin{figure}
        \centering
        \begin{tikzpicture}
        \node at (0,0) {\includegraphics[width=0.52\textwidth]{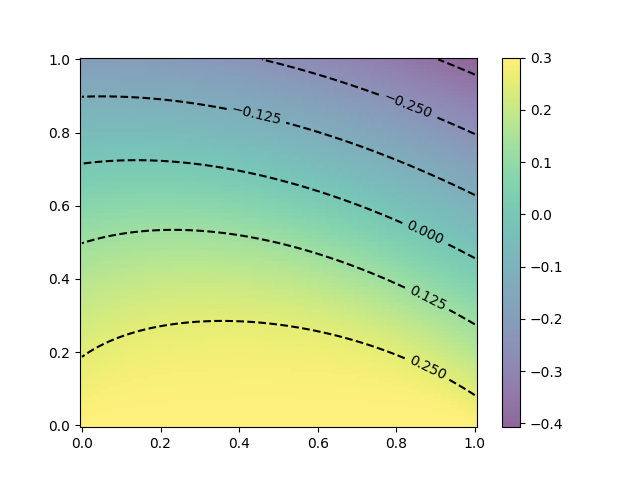}};
        \node at (-0.6,2.9) {Entanglement test};
        \node at (3,2.9) {\footnotesize{$\braket{\Lambda^*(W)}_{\mathrm{min}}$}};
        \node at (-0.6,-3.1) {$q$};
        \node[rotate=0] at (-3.9,0) {$p$};
        \end{tikzpicture}
        \caption{%
            Entanglement test for the witness $\Lambda^*(W)$ with the random unitary channel $\Lambda$ in Eq.~\eqref{eq:rand-uni} with unitaries given in Eq.~\eqref{eq:ex-rand-uni}.
            The plot shows the minimum of $\braket{\Lambda^*(W)}_{\sigma}$ as a function of the probabilities $p$ and $q$.
        }\label{fig:ent-test-rand} 
    \end{figure}

    \section{Conclusion}
    \label{sec:Fin}
    
    We have studied the potential of a quantum physical process to generate entanglement. 
    The operations that are unable to do so, even probabilistically, are stochastically non-entangling maps.
    Based on their convex structure, the channel Schmidt number was introduced, establishing a genuine measure of a channel's ability to create entanglement.
    We found that stochastically non-entangling maps do not increase the Schmidt-number while certain non-entangling maps can do so.
    This shows that probabilistic transformations can increase the Schmidt number, a feature completely absent in the (S)LOCC paradigm.
    With our theory in place, pursuing similar constructions for other convex classes of operations, such as PPT-preserving channels \cite{APE03}, may be a fruitful endeavor for future research.
    Moreover, we identified non-entangling maps as those having a witness-preserving dual map.
    There we saw that the dual map of a non-entangling process cannot make a witness finer.
    This highlights the conceptual difference between entanglement measures and witnesses.
    We further derived witness-based inequalities whose violation detects entangling dynamics, and we applied the theory to generalized measurements and random unitary channels.

    Our approach characterizes the quantum nature of channels by comparison to the classical correlations introduced by statistical mixing of channels. 
    We pointed out the deep connection between non-entangling maps and entanglement witnesses.
    This insight is useful for distinguishing quantum channels which truly produce entanglement from those who simply harness and amplify entanglement already present in a system, including its surrounding.
    In a broader context, this increases our general understanding of the entangling strength of quantum processes in physical systems.

	\acknowledgements
	We gratefully acknowledge financial support from Denmarks Grundforskningsfond (DNRF 139, Hy-Q Center for
	Hybrid Quantum Networks) and the Alexander von Humboldt
	Foundation (Feodor Lynen Research Fellowship).
    J.S. acknowledges funding through the Ministry of Culture and Science of the State of North Rhine-Westphalia (PhoQC initiative) and the QuantERA project QuCABOoSE.

	\appendix

    \section{Proof that $\Lambda$ in Example \ref{ex:non-mono} is non-entangling}
	\label{app:robust}
    
    Consider the two-qudit channel
    \begin{equation}
        \label{eq:channel-general}
        \Lambda(\rho)=\Tr\{\phi_k^+\rho\}\ket{\psi}\bra{\psi} + \Tr\{(\mathbb{1}-\phi_k^+)\rho\}\frac{\mathbb{1}^{\otimes 2}}{d^2},
    \end{equation}
    where $\ket{\phi^+_k}=\tfrac{1}{\sqrt{k}}\sum_{a=0}^{k-1}\ket{aa}$ has Schmidt rank $k$ and 
        \begin{equation}
            \ket{\psi}=\sum_{b=0}^{d-1}\lambda_b \ket{bb},
        \end{equation}
    has Schmidt coefficients $\lambda_{0}\geq\dots \geq \lambda_{d-1}>0$.
    Here, we show that $\Lambda$ is non-entangling for $\lambda_0\lambda_1\leq (k-1)/d^2$.
    First, the state 
    \begin{equation}
        \Lambda(\sigma)=p\ket{\psi}\bra{\psi} + (1-p)\frac{\mathbb{1}^{\otimes 2}}{d^2}
    \end{equation}
    is separable for \cite{VT99}
    \begin{equation}
        p\leq \frac{1}{1+d^2\lambda_0\lambda_1},
    \end{equation}
    where $\lambda_0,\lambda_1$ are the largest Schmidt coefficients of $\ket{\psi}$.
    Second, for any separable state $\sigma\in\mathcal{S}$, one has $\Tr(\phi_k^+\sigma)\leq 1/k$.
    Thus, the state $\Lambda(\sigma)$ is separable for
    \begin{equation}
        \label{eq:A1}
        \lambda_0\lambda_1\leq (k-1)/d^2.
    \end{equation}
    For Schmidt coefficients satisfying Eq.~\eqref{eq:A1} $\Lambda$ is non-entangling; see also Ref.~\cite{CV20} for similar examples.

    \section{Minimization of mean values}
	\label{app:mini}
	
	In this appendix, we give additional details on the optimization problems encountered in sections \ref{ssec:meas-channels} and \ref{ssec:random-unitary}.
    The optimization is of the form
    \begin{equation}
        \min_{\sigma\in\mathcal{S}} \braket{W}_{\Lambda(\sigma)}=\min_{\sigma\in\mathcal{S}} \braket{\Lambda^*(W)}_{\sigma},
    \end{equation}
    where $W$ is a witness and $\Lambda$ is a channel.
	First, note that it suffices to minimize over pure product states.
	To see this, let $\sigma=\sum_ap_a\ket{\chi_a}\bra{\chi_a}$ be any separable state, with 
    \begin{equation}
        \ket{\chi_a}=\ket{\chi_1^a}\otimes \dots\otimes\ket{\chi_n^a}\in\mathcal{S}_0,
    \end{equation}
    being pure product states.
    Without loss of generality, assume $\ket{\chi_1}$ is the pure product state of the ensemble, which minimizes the mean value, i.e., $\braket{\Lambda^*(W)}_{\chi_1}\leq \braket{\Lambda^*(W)}_{\chi_a}$, for all $a$.
	Then,
	\begin{equation}
		\begin{split}
			\braket{\Lambda^*(W)}_\sigma&=\sum_a p_a\braket{\Lambda^*(W)}_{\chi_a}\\
			&\geq  \sum_a p_a \braket{\Lambda^*(W)}_{\chi_1}\\
			&=\braket{\Lambda^*(W)}_{\chi_1}.
		\end{split}
	\end{equation}
	Thus, for every separable state $\sigma\in\mathcal{S}$, there is a pure product state $\chi\in\mathcal{S}_0$ of lower or equal mean value.
	Therefore, we obtain the sought result 
	\begin{equation}
		\min_{\sigma\in\mathcal{S}} \braket{\Lambda^*(W)}_\sigma=\min_{\chi\in\mathcal{S}_0} \braket{\Lambda^*(W)}_\chi,
	\end{equation}
    showing that optimization over product states suffices.

    \begin{widetext}
        For the bipartite measurement channel $\Lambda$ [see Eq. \eqref{eq:meas-channel}] as specified in Eq. \eqref{eq:meas-ex}, we get 
        \begin{equation}
        \begin{split}
             \braket{\Lambda^*(V)}_{\chi}
            =&\braket{\psi^{-}}_{\chi}\left(p\braket{V}_{\psi^-}+(1-p)\braket{V}_{\tfrac{1}{4}\mathbb{1}^{\otimes 2}}\right)
            + \braket{\mathbb{1}^{\otimes 2}-\psi^{-}}_{\chi}\left(q\braket{V}_{\psi^-}+(1-q)\braket{V}_{\psi^+}\right)\\
            =& \frac{1-3p}{2}\braket{\psi^{-}}_{\chi} + (1-2q)\big(1-\braket{\psi^{-}}_{\chi}\big),
        \end{split}
        \end{equation}
        where $V$ is the swap operation and $\braket{V}_{\psi^{\pm}}=\pm1$, for $\psi^{\pm}=\ket{\psi^\pm}\bra{\psi^\pm}$.
        Thus, $\braket{\psi^{-}}_{\chi}$ is the only quantity over which we need to optimize, and it takes values within the interval $[0,1/2]$.
    
        Next, we consider the random unitary channel $\Lambda$ [see Eq. \eqref{eq:rand-uni}] as specified in Eq. \eqref{eq:ex-rand-uni},
	    \begin{equation}
        \label{eq:min-rand-uni}
		\begin{split}
			\braket{\Lambda^*(W)}_{\chi}&=\frac{4}{5}-\braket{\Lambda^*(\phi^+)}_{\chi},\\
            &=\frac{4}{5}-(1-p-q)\braket{\phi^+}_\chi - p\braket{\mathrm{C}X \phi^+ \mathrm{C}X}_\chi - q\braket{(\mathbb{1}\otimes X) \phi^+ (\mathbb{1}\otimes X)}_\chi,\\
			&=\frac{4}{5}-(1-p-q)\braket{\phi^+}_\chi - p\braket{\ket{+0}\bra{+0}}_\chi- q\braket{\psi^+}_\chi,\\
		\end{split}
	    \end{equation}
	    where $W=\tfrac{4}{5}\mathbb{1}^{\otimes 2}-\phi^+$ is the witness and we used that $\Lambda^*$ is unital, $\Lambda^*(\mathbb{1}^{\otimes 2})=\mathbb{1}^{\otimes 2}$.
    	For a product state
        \begin{equation}
            \label{eq:pure-prod-state}
            \ket{\chi}=\big(\alpha_0\ket{0}+\alpha_1e^{i\varphi}\ket{1}\big)\otimes \big(\beta_0\ket{0}+\beta_1e^{i\theta}\ket{1}\big),
        \end{equation}
        with $\alpha_a,\beta_a\in\mathbb{R}$ and $\varphi,\theta\in[0,2\pi)$,
        the relevant mean values for the optimization become
    	\begin{equation}
            \label{eq:min-mean}
    		\begin{split}
    			\braket{\phi^{+}}_\chi&=\frac{1}{2}\big|\alpha_0 \beta_0 + \alpha_1\beta_1e^{i(\varphi+\theta)}\big|^2\leq \frac{1}{2}\big|\alpha_0 \beta_0 + \alpha_1\beta_1\big|^2,\\
    			\braket{\psi^{+}}_\chi&=\frac{1}{2}\big|\alpha_0 \beta_1e^{i\theta} + \alpha_1\beta_0e^{i\varphi}\big|^2 \leq \frac{1}{2}\big|\alpha_0 \beta_1 + \alpha_1\beta_0\big|^2,\\
    			|\braket{+0|\chi}|^2&=\frac{1}{2}\big|\alpha_0 + \alpha_1e^{i\varphi}\big|^2|\beta_1|^2\leq \frac{1}{2}\big|\alpha_0 + \alpha_1\big|^2|\beta_1|^2.\\
    		\end{split}
    	\end{equation}
        As we wish to minimize the expression in Eq. \eqref{eq:min-rand-uni}, we have to simultaneously maximize the quantities in Eq. \eqref{eq:min-mean}.
        Thus, the relative phases $e^{i\varphi}$ and $e^{i\theta}$ can be neglected, justifying an optimization over real-valued amplitudes.
    \end{widetext}

\end{document}